\title{On degeneration of tensors and algebras}
\author[1]{Markus Bl\"aser}
\author[2]{Vladimir Lysikov}
\affil[1]{Department of Computer Science, Saarland University, Saarbr\"ucken, Germany\\
    \texttt{mblaeser@cs.uni-saarland.de}}
\affil[2]{Cluster of Excellence MMCI and Department of Computer Science, Saarland University, Saarbr\"ucken, Germany\\
    \texttt{vlysikov@cs.uni-saarland.de}}
\authorrunning{M. Bl\"aser and V. Lysikov}
\subjclass{F.2.1 Numerical Algorithms and Problems}
\keywords{bilinear complexity, border rank, commutative algebras, lower bounds}
\begin{document}

\maketitle

\begin{abstract}
An important building block in all current asymptotically fast algorithms for matrix
multiplication are tensors with low border rank, that is, tensors whose border rank is
equal or very close to their size. To find new asymptotically fast algorithms for 
matrix multiplication, it seems to be important to understand those tensors whose
border rank is as small as possible, so called tensors of minimal border rank.

    We investigate the connection between degenerations of associative algebras and degenerations of their structure tensors in the sense of Strassen.
    It allows us to describe an open subset of $n \times n \times n$ tensors of minimal border rank in terms of smoothability of commutative algebras.
    We describe the smoothable algebra associated to the Coppersmith-Winograd tensor and prove a lower bound for the border rank of the tensor used in the ``easy construction'' of Coppersmith and Winograd.
\end{abstract}

\section{Introduction}

Let $V_1, V_2, V_3$ be vector spaces.
The tensor product $V_1\otimes V_2\otimes V_3$ is spanned by tensors of the form $v_1\otimes v_2\otimes v_3$, which are called \emph{decomposable} tensors,
i.~e., any tensor $T\in V_1\otimes V_2\otimes V_3$ can be represented as a sum 
\begin{equation}
\label{eqn-polyadic-decomposition}
 T = \sum_{s = 1}^r v_{1,s} \otimes v_{2,s} \otimes v_{3,s}.
\end{equation}
This representation is called a \emph{polyadic decomposition} of $T$.
The minimal number of summands in a polyadic decomposition of $T$ is called the \emph{rank} of $T$.
Tensor rank is a direct generalization of the usual notion of matrix rank, which can be defined as a minimal number of summands in a representation of a matrix as a sum of rank one matrices.
Unlike in the matrix case, the set $\mathbf{R}_r$ of all tensors of rank at most $r$ is in general not closed,
so it is useful to consider not only exact polyadic decompositions, but also approximations of tensors by sums of the form~\eqref{eqn-polyadic-decomposition}.
Given a tensor $T$, the minimal number $r$ such that $T$ is contained in the closure of $\mathbf{R}_r$ is called the \emph{border rank} of $T$.

Rank and border rank of tensors have diverse applications (see~\cite{comon-2002-survey,kolda-bader-2009-survey,landsberg-2012-book} for more information).
Our motivation originates from computational complexity theory of bilinear maps.
Any bilinear map $\varphi\colon U \times V \to W$ between finite-dimensional vector spaces is a contraction with some tensor from $U^* \otimes V^* \otimes W$, called the \emph{structural tensor} of $\varphi$.
Polyadic decompositions of the structural tensor can be interpreted as algorithms of a certain kind for computing $\varphi$,
and the rank of the structural tensor of a bilinear map is a measure of its computational complexity.
See~\cite{buergisser-1997-book} for a detailed exposition of bilinear complexity theory.

One interesting problem in this area is the classification of concise tensors of minimal border rank.
A tensor from $V_1 \otimes V_2 \otimes V_3$ is \emph{concise} if it is not contained in any proper subspace $V'_1 \otimes V'_2 \otimes V'_3 \subset V_1 \otimes V_2 \otimes V_3$.
The border rank of a concise tensor is bounded from below by $\max \{ \dim V_i \}$. Tensors for which this bound is exact are called tensors \emph{of minimal border rank}.

Tensors of minimal border rank correspond to bilinear maps that have low complexity and can be used to construct efficient bilinear algorithms.
For example, the famous Coppersmith-Winograd algorithm for matrix multiplication~\cite{coppersmith-winograd-1990-matrix} (as well as its recent improvements~\cite{legall-2014-powers}) uses a tensor of minimal border rank as a basic block.
Such a tensor, like the Coppersmith-Winograd tensor, usually appears ``out of the blue'' and this starting tensor, which at a first glance has only very little to do with matrix multiplication, is then used to design a fast  matrix multiplication algorithm by looking at high powers of the starting tensor.
Therefore, to make further progress in the design of fast matrix multiplication algorithms, a systematic description of the tensors of minimal border rank seems to be very helpful.
As our first result, we describe (an open subset) of the tensors of minimal border rank in terms of their structure.
It turns out, that these tensors are the multiplication tensors of so-called smoothable commutative algebras.
These algebras are studied in algebraic geometry in connection with Hilbert schemes of points, and have received quite some attention in the recent years, however, their structure is not fully understood.

Recently, Landsberg and Micha\l{}ek~\cite{landsberg-michalek-2015-abelian} described tensors of minimal border rank in $\mathbb{C}^n \otimes \mathbb{C}^n \otimes \mathbb{C}^n$ that have a slice of rank $n$ in terms of certain Lie algebras constructed from the slices of the tensor.
In this paper we consider a slightly stronger condition, namely, existence of rank-$n$ slices in two slicing directions,
and prove that any tensor of minimal border rank satisfying this condition is equivalent to a structure tensor of a smoothable commutative algebra.

Furthermore, we describe a method which can be though of as a limiting version of the substitution method for the rank lower bounds (essentially the same method was independently described by Landsberg and Micha\l{}ek~\cite{landsberg-michalek-2016-geometry}) and use it to prove a lower bound on the border rank of tensor powers of the restricted Coppersmith-Winograd tensor used in the ``easy construction'' of~\cite{coppersmith-winograd-1990-matrix}.
This easy tensor is not a tensor of minimal border rank.
However, as pointed out  in~\cite[Rem.~15.44]{buergisser-1997-book}, if this tensor had asymptotically minimal  border rank, then the exponent of matrix multiplication would be $2$. 
Asymptotically minimal border rank means that the border rank of the tensor powers converges to the size of the tensor.
While our bound is nontrivial, it does not rule out that the easy tensor has asymptotically minimal border rank. 

\section{Preliminaries}

\subsection{Notation and basic definitions}

All vector spaces are presumed to be finite-dimensional vector spaces over some fixed algebraically closed field $k$.
Letters $U$, $V$, $W$, possibly indexed, denote vector spaces, $A$ denotes algebras.
$\varepsilon$ usually denotes some indeterminate.
Linear maps and tensors over $k(\varepsilon)$ are rendered in calligraphic font.

We do not distinguish between bilinear maps $U \times V \to W$ and corresponding tensors in $U^* \otimes V^* \otimes W$.
When there is no confusion, multiplication in some algebra as a bilinear map is denoted by the same symbol as the algebra itself.
In particular, $k^r$ denotes the coordinate-wise multiplication of $r$-dimensional vectors.

A bilinear map $\varphi \in V^* \otimes V^* \otimes V$ is called \emph{unital} if there is an \emph{identity element} $e \in V$ such that $\varphi(e, x) = \varphi(x, e) = x$ for all $x\in V$.

Any tensor in $V_1 \otimes V_2 \otimes V_3$ with $\dim V_i = n_i$ is said to have \emph{format} $n_1 \times n_2 \times n_3$. We are mostly interested in tensors of format $n \times n \times n$.

Rank and border rank of a tensor $T$ are denoted by $R(T)$ and $\underline{R}(T)$ respectively.

The Zariski closure of a set $\mathbf{S}$ is denoted by $\overline{\mathbf{S}}$. We use the Zariski topology to define the border rank of tensors over $k$. However, over $\mathbb{C}$, the Zariski closure of the set of all tensors in $V_1 \otimes V_2 \otimes V_3$ of rank at most $r$ coincides with its Euclidean closure, thus capturing the idea of approximation.

Let $T\in V_1\otimes V_2\otimes V_3$ and $T'\in V'_1\otimes V'_2\otimes V'_3$ be two tensors.
$T'$ is a \emph{restriction} of $T$ (denoted $T'\leq T$) if there exists a triple of linear maps $F_i\colon V_i\to V'_i$ such that $T' = (F_1\otimes F_2\otimes F_3) T$. We call the operator $(F_1\otimes F_2\otimes F_3)$ \emph{a restriction operator} for $T' \leq T$.

Two tensors $T_1$ and $T_2$ are called \emph{equivalent} ($T_1 \sim T_2$) if $T_1 \leq T_2$ and $T_2 \leq T_1$. If $T_1$ and $T_2$ have the same format, they are equivalent iff there is a bijective restriction operator for $T_1 \leq T_2$.

The tensor rank can be defined via restrictions of $k^r$ (or, equivalently, diagonal tensors $\sum_{i = 1}^r e_i \otimes e_i \otimes e_i$): polyadic decompositions of a tensor $T$ are in one-to-one correspondence with restriction operators for $T \leq k^r$, so $R(T) \leq r$ iff $T \leq k^r$.

For more information, we refer to~\cite{buergisser-1997-book}.

\subsection{Degeneration of tensors}

Degeneration of tensors was introduced by Strassen~\cite{strassen-1987-relative}.
It is an approximate analogue of restriction:
a tensor $T'$ is a \emph{degeneration} of $T$ (denoted $T' \trianglelefteq T$) if \[T' \in \overline{\{t\in V'_1\otimes V'_2\otimes V_3 \mid t \leq T\}}.\] 

Strassen gives alternative descriptions of degeneration.
One of these descriptions is in terms of representation theory.
Consider the group $G  = \operatorname{GL}(V_1) \times \operatorname{GL(V_2)} \times \operatorname{GL}(V_3)$.
It acts on $V_1 \otimes V_2 \otimes V_3$ in a standard way: \[(F_1, F_2, F_3)  \cdot T = (F_1 \otimes F_2 \otimes F_3) T.\]
The orbits of this action are equivalence classes of tensors in $V_1 \otimes V_2 \otimes V_3$.

\begin{lemma}[Strassen~\cite{strassen-1987-relative}]
    Let $T\in V_1\otimes V_2\otimes V_3$ and $T'\in V'_1\otimes V'_2\otimes V'_3$ be two tensors. $T' \trianglelefteq T$ if and only if there exists a tensor $S \in V_1\otimes V_2\otimes V_3$ such that $T'\sim S$ and $S \in \overline{G\cdot T}$.
\end{lemma}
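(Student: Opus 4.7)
My plan is based on two elementary observations. First, every invertible triple $(F_1, F_2, F_3) \in G$ is in particular a triple of linear maps $V_i \to V_i$, so each element of $G\cdot T$ is a restriction of $T$ (with $V_i' = V_i$). Second, $G$ is Zariski-open and dense in $\mathrm{End}(V_1)\times\mathrm{End}(V_2)\times\mathrm{End}(V_3)$, and the map $\mu\colon (H_1,H_2,H_3)\mapsto (H_1\otimes H_2\otimes H_3)T$ is polynomial and hence Zariski-continuous. Consequently $\overline{G\cdot T} = \overline{\mu(G)} = \overline{\mu(\mathrm{End}(V_1)\times\mathrm{End}(V_2)\times\mathrm{End}(V_3))}$, that is, the orbit closure equals the closure of the set of all tensors in $V_1\otimes V_2\otimes V_3$ obtained from $T$ by an arbitrary (not necessarily invertible) endomorphism triple.

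For the $(\Leftarrow)$ direction, I fix $S \in \overline{G\cdot T}$ with $S\sim T'$ and a restriction operator $F = F_1\otimes F_2\otimes F_3$ realizing $T'\leq S$. Since the composition of a $G$-action on $T$ with $F$ is again a restriction operator for $T$ landing in $V_1'\otimes V_2'\otimes V_3'$, we have $F(G\cdot T) \subseteq \{t \in V_1'\otimes V_2'\otimes V_3' \mid t\leq T\}$. By Zariski-continuity of $F$, applying $F$ to both sides of $S \in \overline{G\cdot T}$ yields $T' = F(S) \in \overline{F(G\cdot T)} \subseteq \overline{\{t\leq T\}}$, which is precisely $T'\trianglelefteq T$.

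For the $(\Rightarrow)$ direction, suppose $T'\trianglelefteq T$. After adjoining zero summands to the $V_i$ if necessary (which changes neither the restriction set nor the orbit in an essential way), I may assume $\dim V_i \geq \dim V_i'$; then I fix linear embeddings $\iota_i\colon V_i'\hookrightarrow V_i$, set $\iota := \iota_1\otimes\iota_2\otimes\iota_3$, and define $S := \iota(T')\in V_1\otimes V_2\otimes V_3$. Injectivity of $\iota$ gives $S\sim T'$ at once. Any tensor in $\iota(\{t\leq T\})$ has the form $\mu(\iota_1 F_1,\iota_2 F_2,\iota_3 F_3)$ for some $F_i\colon V_i\to V_i'$, so $\iota(\{t\leq T\}) \subseteq \mu(\mathrm{End}(V_1)\times\mathrm{End}(V_2)\times\mathrm{End}(V_3))$; combining this inclusion with continuity of $\iota$ and the hypothesis $T'\in\overline{\{t\leq T\}}$, one obtains $S = \iota(T') \in \overline{\iota(\{t\leq T\})} \subseteq \overline{\mu(\mathrm{End}(V_1)\times\mathrm{End}(V_2)\times\mathrm{End}(V_3))} = \overline{G\cdot T}$ by the second observation above.

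The only real subtlety is bookkeeping caused by the different ambient spaces of $T$ and $T'$: once $T'$ is lifted to an equivalent tensor $S$ inside $V_1\otimes V_2\otimes V_3$ via the embedding $\iota$, the remaining work reduces to the density of $G$ in the endomorphism semigroup and the Zariski-continuity of the polynomial action map $\mu$. I do not foresee any serious conceptual obstruction.
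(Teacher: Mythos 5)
The paper only cites this lemma from Strassen and contains no proof of it, so your argument has to be judged on its own merits. Its skeleton is the standard one and is mostly sound: since each $\operatorname{GL}(V_i)$ is open and dense in $\operatorname{End}(V_i)$ and the map $(H_1,H_2,H_3)\mapsto (H_1\otimes H_2\otimes H_3)T$ is polynomial, $\overline{G\cdot T}$ is indeed the closure of the set of all restrictions of $T$ inside $V_1\otimes V_2\otimes V_3$; your $(\Leftarrow)$ direction is correct, and your $(\Rightarrow)$ direction is correct whenever $\dim V'_i\leq\dim V_i$, where the embeddings $\iota_i$ exist and $S=\iota(T')$ does the job.

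The gap is the reduction to that case. Adjoining zero summands to the $V_i$ leaves the hypothesis $T'\trianglelefteq T$ unchanged, but it does \emph{not} leave the conclusion unchanged: the lemma demands $S\in V_1\otimes V_2\otimes V_3$ lying in $\overline{G\cdot T}$ for the original $G=\operatorname{GL}(V_1)\times\operatorname{GL}(V_2)\times\operatorname{GL}(V_3)$, whereas after enlargement your construction only produces a tensor in the bigger space lying in the orbit closure under the bigger group, and such a tensor may have support sticking out of $V_1\otimes V_2\otimes V_3$. So the parenthetical ``changes neither the restriction set nor the orbit in an essential way'' is precisely the assertion that needs proof, and as stated it is doing real work. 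The clean repair goes in the opposite direction: shrink $V'$ rather than enlarge $V$. Every restriction $(F_1\otimes F_2\otimes F_3)T$ with $F_i\colon V_i\to V'_i$ has $i$-th flattening rank at most $\dim V_i$, and loci of bounded flattening rank are Zariski closed, so $T'\trianglelefteq T$ already forces the minimal $W_i\subseteq V'_i$ with $T'\in W_1\otimes W_2\otimes W_3$ to satisfy $\dim W_i\leq\dim V_i$. Replacing $T'$ by the same tensor viewed in $W_1\otimes W_2\otimes W_3$ (equivalent to $T'$, and still a degeneration of $T$, because the projections $V'_i\to W_i$ send restrictions of $T$ to restrictions of $T$ and you can apply the same continuity argument you already use) puts you exactly in the case your embedding argument handles. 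Alternatively, keep the enlargement, but then you must add an argument descending from the enlarged orbit closure back to $\overline{G\cdot T}$ — e.g.\ move the support of the constructed tensor into $V_1\otimes V_2\otimes V_3$ by an invertible triple and then project — which amounts to the same missing observation.
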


Another description of degeneration uses base extension from $k$ to $k(\varepsilon)$.
For any vector space $V$ over $k$ its base extension $V(\varepsilon) = V \otimes k(\varepsilon)$ is a vector space over $k(\varepsilon)$.
We have an injection $V \hookrightarrow V(\varepsilon)$ defined by $v \mapsto v \otimes 1$.
Analogously, $V_1 \otimes V_2 \otimes V_3$ injects into $V_1(\varepsilon) \otimes_{k(\varepsilon)} V_2(\varepsilon) \otimes_{k(\varepsilon)} V_3(\varepsilon)$, so any tensor $T\in V_1 \otimes V_2 \otimes V_3$ can be viewed as a tensor over $k(\varepsilon)$, which we also denote by $T$.

Going in the other direction, we have a partial map from $k(\varepsilon)$ to $k$ that takes a rational function regular at $\varepsilon = 0$ to its value at $0$.
It can be extended to partial maps from $V(\varepsilon)$ to $V$ for each vector space $V$.
If $\mathcal{T}|_{\varepsilon = 0} = T$, we sometimes write $\mathcal{T} = T + O(\varepsilon)$, thinking of $\varepsilon$ as an infinitesimal.

\begin{lemma}[Strassen~\cite{strassen-1987-relative}]
\label{lemma-algebraic-definition}
    Let $T\in V_1\otimes V_2\otimes V_3$ and $T'\in V'_1\otimes V'_2\otimes V'_3$ be two tensors.
    $T' \trianglelefteq T$ if and only if there exists $\mathcal{T} \in V_1(\varepsilon) \otimes_{k(\varepsilon)} V_2(\varepsilon) \otimes_{k(\varepsilon)} V_3(\varepsilon)$ such that $\mathcal{T}|_{\varepsilon = 0} = T'$ and $\mathcal{T} \leq T$ as tensors over $k(\varepsilon)$.
\end{lemma}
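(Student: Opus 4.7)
The plan is to prove the two directions of the equivalence separately; the implication $(\Leftarrow)$ reduces to specializing $\varepsilon$ at generic values in $k$, while the implication $(\Rightarrow)$ uses the curve selection lemma combined with a lifting argument over a discrete valuation ring.

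For $(\Leftarrow)$, I would unpack the restriction hypothesis as $\mathcal{T} = (\mathcal{F}_1 \otimes \mathcal{F}_2 \otimes \mathcal{F}_3)T$ with $\mathcal{F}_i \in \operatorname{Hom}(V_i, V'_i) \otimes_k k(\varepsilon)$. Each entry of each $\mathcal{F}_i$ is a rational function in $\varepsilon$, hence regular away from some finite set $S \subset k$. For every $\alpha \in k \setminus S$, substituting $\varepsilon \mapsto \alpha$ gives honest linear maps $F_i(\alpha)$ and a restriction $T(\alpha) := (F_1(\alpha) \otimes F_2(\alpha) \otimes F_3(\alpha))T \leq T$. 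Because $\mathcal{T}$ itself is regular at $\varepsilon = 0$ with value $T'$, the map $\alpha \mapsto T(\alpha)$ defined on the cofinite open $k \setminus S$ extends to $\alpha = 0$ with value $T'$. Its image lies in $\{t \leq T\}$ and passes through $T'$ in the Zariski closure, so $T' \trianglelefteq T$.

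For $(\Rightarrow)$, I would work with the polynomial morphism
\[
    \pi\colon X = \prod_{i=1}^{3} \operatorname{Hom}(V_i, V'_i) \to V'_1 \otimes V'_2 \otimes V'_3, \qquad \pi(F_1, F_2, F_3) = (F_1 \otimes F_2 \otimes F_3)T.
\]
Its image is exactly $\{t \leq T\}$, which by Chevalley's theorem is constructible, and by hypothesis $T' \in \overline{\pi(X)}$. The curve selection lemma then furnishes a smooth affine curve $C$, a closed point $c_0 \in C$, and a morphism $\gamma\colon C \to \overline{\pi(X)}$ with $\gamma(c_0) = T'$ and $\gamma(C \setminus \{c_0\}) \subset \pi(X)$. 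Next I would lift $\gamma$ along $\pi$ over the generic point of $C$ by choosing an irreducible component $Z$ of the fibre product $X \times_{\overline{\pi(X)}} C$ that dominates $C$, replacing $Z$ by its normalization, and passing if necessary to a finite cover $C' \to C$ so that the generic fibre of $\pi$ acquires a rational point. A uniformizer $\varepsilon$ at a closed preimage $c'_0 \in C'$ of $c_0$ identifies the corresponding local function field with $k(\varepsilon)$, delivering the required $\mathcal{F}_i \in \operatorname{Hom}(V_i, V'_i) \otimes_k k(\varepsilon)$ with $\mathcal{T} = (\mathcal{F}_1 \otimes \mathcal{F}_2 \otimes \mathcal{F}_3)T$ satisfying $\mathcal{T}|_{\varepsilon = 0} = T'$.

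The main obstacle I foresee is precisely this identification of the local function field with $k(\varepsilon)$: $C'$ is in general not rational, so the matrix entries obtained from the lifting step a priori only lie in $k(C')$ or in the completion $k((\varepsilon))$, rather than in $k(\varepsilon)$ itself. The cleanest way around this is to invoke Artin approximation, which transfers a formal power-series solution in $k[[\varepsilon]]$ of a finite system of polynomial equations to a rational solution in $k(\varepsilon)$ agreeing with it modulo any prescribed power of $\varepsilon$, hence having the same specialization at $\varepsilon = 0$. Alternatively, exploiting the specific polynomial form of $\pi$ (trilinear in $(F_1, F_2, F_3)$) one can select a rational curve in $X = \prod_i \operatorname{Hom}(V_i, V'_i)$ directly, allowing the $\mathcal{F}_i$ to carry poles at $\varepsilon = 0$ that cancel in the product to produce a regular $\mathcal{T}$ with $\mathcal{T}|_{\varepsilon=0} = T'$.
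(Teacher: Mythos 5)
Your $(\Leftarrow)$ direction is fine: the entries of $\mathcal{T}$ are regular at $0$, so $\alpha\mapsto\mathcal{T}(\alpha)$ is a morphism on a cofinite subset of $\mathbb{A}^1$ containing $0$ which maps a dense set of parameters into $\{t\le T\}$ and sends $0$ to $T'$; hence $T'\in\overline{\{t\le T\}}$. (Note the paper itself does not prove this lemma but cites Strassen, so your argument has to stand on its own.)

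The $(\Rightarrow)$ direction has a genuine gap, located exactly where you flag it, and neither proposed repair closes it. Artin approximation does not deliver solutions with entries in $k(\varepsilon)$: it approximates a formal solution by solutions in the Henselization of $k[\varepsilon]_{(\varepsilon)}$, i.e.\ by \emph{algebraic} power series, which are algebraic over $k(\varepsilon)$ but in general not rational, so the claim that it yields ``a rational solution in $k(\varepsilon)$'' is simply not what the theorem says; moreover, before invoking any such result you would have to recast the condition ``$\mathcal{T}|_{\varepsilon=0}=T'$ with the $\mathcal{F}_i$ allowed poles'' as a genuine polynomial system over $k[\varepsilon]$, which you do not do. Your second alternative (``select a rational curve in $X$ directly'') is essentially a restatement of the assertion to be proved. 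The standard, elementary repair is truncation. From your lift over $C'$, completing the local ring at $c'_0$ (a complete DVR with residue field $k$, hence isomorphic to $k[[\varepsilon]]$) gives matrices with entries in $k((\varepsilon))$; write them as $\mathcal{F}_i=\varepsilon^{-N}\tilde{\mathcal{F}}_i$ with $\tilde{\mathcal{F}}_i$ over $k[[\varepsilon]]$ and $(\mathcal{F}_1\otimes\mathcal{F}_2\otimes\mathcal{F}_3)T=T'+O(\varepsilon)$. Because the left-hand side equals $\varepsilon^{-3N}(\tilde{\mathcal{F}}_1\otimes\tilde{\mathcal{F}}_2\otimes\tilde{\mathcal{F}}_3)T$ and is trilinear in the $\tilde{\mathcal{F}}_i$, replacing each $\tilde{\mathcal{F}}_i$ by its truncation modulo $\varepsilon^{3N+1}$ perturbs the product only by terms in $\varepsilon^{3N+1}k[[\varepsilon]]$, hence perturbs $\mathcal{T}$ only by $O(\varepsilon)$. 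The truncated matrices have Laurent-polynomial entries, so they lie over $k(\varepsilon)$ and give the required degeneration operator with $\mathcal{T}|_{\varepsilon=0}=T'$. This finite-order/truncation step, not Artin approximation, is how the passage from the curve (or its completion) down to $k(\varepsilon)$ is done in the classical treatments~\cite{strassen-1987-relative,buergisser-1997-book}.
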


This lemma allows us to talk about specific ways in which $T$ degenerates into $T'$, which are represented by restriction operators for restrictions of the form $T' + O(\varepsilon) \leq T$ considered in the lemma.
We call these operators \emph{degeneration operators} for $T' \trianglelefteq T$.

Degenerations of $k^r$ are related to border rank in the same way its restrictions are related to rank:
since $R(\varphi) \leq r$ iff $\varphi \leq k^r$, by taking closures we have $\underline{R}(\varphi) \leq r$ iff $\varphi \trianglelefteq k^r$.
In particular, Lemma~\ref{lemma-algebraic-definition} implies existence of \emph{approximate polyadic decompositions}
\[ T + O(\varepsilon) = \sum_{s = 1}^r \mathcal{\scriptstyle V}_{1,s} \otimes_{k(\varepsilon)} \mathcal{\scriptstyle V}_{2,s} \otimes_{k(\varepsilon)} \mathcal{\scriptstyle V}_{3,s}\]

\subsection{Degeneration of algebras}

Strassen's theory of tensor degenerations was inspired by the similar concept in the deformation theory of algebras.

Degeneration of algebras is usually restricted to associative or Lie algebras, but we define it for arbitrary bilinear maps in $V^* \otimes V^* \otimes V$, which can be thought of as nonassociative algebra structures on $V$.
The group $\operatorname{GL}(V)$ acts on $V^* \otimes V^* \otimes V$ by change of basis: \[(g\cdot \varphi) (x, y) = g \varphi(g^{-1} x, g^{-1} y).\]
The orbits of this action are isomorphism classes of nonassociative algebras.

Let $\varphi, \varphi' \in V^* \otimes V^* \otimes V$. We call $\varphi'$ an \emph{algebraic degeneration} of $\varphi$ (denoted $\varphi' \trianglelefteq_{\mathrm{a}} \varphi$) if $\varphi'$ lies in the orbit closure $\overline{\operatorname{GL}(V) \cdot \varphi}$.
The name ``algebraic degeneration'' is used here to distinguish between two different notions of degeneration on $V^* \otimes V^* \otimes V$ and does not appear in the literature on degeneration of algebras.
It is easy to see that $\varphi' \trianglelefteq_{\mathrm{a}} \varphi$ implies $\varphi' \trianglelefteq \varphi$.

We can extend the definition of algebraic degeneration to bilinear maps on different spaces of the same dimension by saying that if $\varphi'$ is an algebraic degeneration of $\varphi$, then any $\psi'$ isomorphic to $\varphi'$ as a nonassociative algebra is also a algebraic degeneration of $\varphi$.

Since associativity and commutativity properties define closed subsets of $V^* \otimes V^* \otimes V$,
degenerations of associative (resp.~commutative) algebras are themselves associative (commutative).

\begin{definition}
    An unital algebra $A$ of dimension $n$ such that $A \trianglelefteq_{\mathrm{a}} k^n$  is called \emph{smoothable}.
\end{definition}

As follows from previous discussion, smoothable algebras are always associative and commutative.

In the geometric study of finite-dimensional commutative algebras they are sometimes studied as elements of a variety in $V^* \otimes V^* \otimes V$ or a similar scheme,
and sometimes --- as elements of a Hilbert scheme of points $\mathbf{Hilb}_n(\mathbb{A}^d_k)$, which parameterizes $0$-dimensional schemes on $d$-dimensional affine plane,
or, equivalently, ideals $I$ in $R = k[x_1,\dots,x_d]$ such that $R/I$ is an $n$-dimensional algebra.
The exact relationship between these two approaches is explored in~\cite{poonen-2008-moduli}.
We will only need the fact that topologies on $V^* \otimes V^* \otimes V$ and on $\mathbf{Hilb}_n(\mathbb{A}^d_k)$ give the same notion of smoothability, so we can use results from~\cite{cartwright-erman-velasco-viray-2009-hilbert,erman-velasco-2010-syzygetic} formulated in the language of Hilbert schemes.

There are analogues of Lemma~\ref{lemma-algebraic-definition} for algebraic degeneration (for example,~\cite[\S~3.9]{kraft-1982-geometric} gives a geometric formulation of a similar statement).
We only need the easier part of the equivalence which says that if $\varphi'$ is approximated by bilinear maps isomorphic to $\varphi$, then it is an algebraic degeneration of $\varphi$.

\begin{lemma}
\label{lemma-algebraic}
    Let $\varphi, \varphi' \in V^* \otimes V^* \otimes V$ be two bilinear maps on $V$.
    If there exists an invertible $k(\varepsilon)$-linear map $\mathcal{F}\colon V(\varepsilon) \to V(\varepsilon)$ such that
    \[\mathcal{F}^{-1} \varphi(\mathcal{F} x, \mathcal{F} y) |_{\varepsilon = 0} = \varphi'(x, y)\quad \text{for all $x,y \in V$},\]
    then $\varphi' \trianglelefteq_{\mathrm{a}} \varphi$.
\end{lemma}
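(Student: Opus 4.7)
The plan is to interpret the hypothesis as saying that $\varphi'$ lies at the limit of a one-parameter family of points in the orbit $\operatorname{GL}(V)\cdot\varphi$, and then to conclude via the very definition of Zariski closure.

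More precisely, I would fix bases of $V$ so that $\mathcal{F}$ becomes a matrix $M(\varepsilon) \in \operatorname{GL}_n(k(\varepsilon))$, and observe that both $M(\varepsilon)$ and $M(\varepsilon)^{-1}$ have entries in $k(\varepsilon)$. Specializing $\varepsilon$ to an element $t$ of the ground field $k$ wherever possible yields a rational map
\[ g\colon \mathbb{A}^1_k \dashrightarrow V^* \otimes V^* \otimes V, \qquad g(t) = M(t)^{-1} \cdot \varphi, \]
where the dot denotes the $\operatorname{GL}(V)$-action. Let $U \subseteq \mathbb{A}^1_k$ be the open subset on which all entries of $M$, all entries of $M^{-1}$, and thus $g$ itself are regular. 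Since $\det M(\varepsilon) \in k(\varepsilon)^\times$, its inverse is also a rational function, so $U$ is a nonempty open subset; for every $t \in U$, the specialization $M(t)$ is a genuine element of $\operatorname{GL}(V)$, and consequently $g(t) \in \operatorname{GL}(V) \cdot \varphi$.

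The hypothesis $\mathcal{F}^{-1}\varphi(\mathcal{F}\cdot,\mathcal{F}\cdot)|_{\varepsilon=0} = \varphi'$ says that each coordinate of $g$, viewed as a rational function in $\varepsilon$, is regular at $\varepsilon = 0$ and takes the value given by the corresponding coordinate of $\varphi'$. Hence $0 \in U$ and $g(0) = \varphi'$. Since $g(U \setminus \{0\})$ is contained in the orbit $\operatorname{GL}(V) \cdot \varphi$, its Zariski closure is contained in $\overline{\operatorname{GL}(V) \cdot \varphi}$. The point $0$ lies in the closure of $U \setminus \{0\}$ in $\mathbb{A}^1_k$ (it is a nonempty cofinite subset of an irreducible curve), and $g$ is continuous on $U$ in the Zariski topology, so
\[ \varphi' = g(0) \in \overline{g(U \setminus \{0\})} \subseteq \overline{\operatorname{GL}(V) \cdot \varphi}, \]
which is exactly $\varphi' \trianglelefteq_{\mathrm{a}} \varphi$.

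The only technical point that needs care is handling the possibility that $\det M$ vanishes or blows up at $\varepsilon = 0$, so that $M$ itself is not defined or not invertible there while the conjugated expression still extends regularly. This is exactly the reason for passing to the open set $U$ and using $g(U \setminus \{0\})$ rather than trying to specialize $M$ at $0$ directly; the rest is a standard application of continuity of polynomial (specialization) maps and of Zariski closure on an irreducible curve.
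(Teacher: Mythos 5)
Your proof is correct and is essentially the paper's own argument: the conjugated family $\varepsilon \mapsto \mathcal{F}^{-1}\varphi(\mathcal{F}\,\cdot\,,\mathcal{F}\,\cdot\,)$ traces an algebraic curve in $V^*\otimes V^*\otimes V$, almost all of whose points lie in the orbit $\operatorname{GL}(V)\cdot\varphi$ because $\mathcal{F}$ specializes to an invertible matrix for almost all $\varepsilon$, so the value at $\varepsilon=0$ lies in the orbit closure. One small wording slip: the hypothesis only gives regularity of $g$ at $0$, not of $M$ or $M^{-1}$, so you should not assert $0\in U$ with $U$ defined as the locus where $M$ and $M^{-1}$ are regular; rather, work with the (possibly larger) open set where $g$ itself is regular, which contains $0$, and apply continuity there --- which is clearly what your final paragraph intends.
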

\begin{proof}
    As $\varepsilon$ varies, the bilinear map $\varphi^{\varepsilon}(x, y) = \mathcal{F}^{-1} \varphi(\mathcal{F} x, \mathcal{F} y)$ traces an algebraic curve in $V^* \otimes V^* \otimes V$.
    Since $\mathcal{F}$ is invertible, its values for Zariski almost all $\varepsilon$ are also invertible, so an open subset of the curve $\{\varphi^{\varepsilon}\}$ lies in the orbit $\operatorname{GL}(V)\cdot \varphi$.
    Therefore, the value at $\varepsilon = 0$ lies in the closure of this orbit.
\end{proof}

We can rephrase this lemma as follows: tensor degeneration $\varphi' \trianglelefteq \varphi$ with a degeneration operator of the form $\mathcal{F}^* \otimes \mathcal{F}^* \otimes \mathcal{F}^{-1}$ implies algebraic degeneration $\varphi' \trianglelefteq_{\mathrm{a}} \varphi$.

\section{Degenerations of associative algebras}

In this section and later \emph{algebra} means associative unital algebra over $k$.

\subsection{Transformations of degeneration operators}

Suppose $T\in V_1 \otimes V_2 \otimes V_3$ and $T' \in V'_1 \otimes V'_2 \otimes V'_3$ are two tensors such that $T' \trianglelefteq T$. Denote by $D(T' \trianglelefteq T)$ the set of all degeneration operators for $T' \trianglelefteq T$.

Let us describe some groups that act on $D(T' \trianglelefteq T)$.
These groups are subgroups of $\operatorname{GL}(V_1(\varepsilon))\times
\operatorname{GL}(V_2(\varepsilon))\times \operatorname{GL}(V_3(\varepsilon))$ and $\operatorname{GL}(V'_1(\varepsilon)) \times \operatorname{GL}(V'_2(\varepsilon)) \times \operatorname{GL}(V'_3(\varepsilon))$ which act on the domain and image of operators in $D(T' \trianglelefteq T)$ in the usual way (a triple $(\mathcal{F}_1, \mathcal{F}_2, \mathcal{F}_3)$ acts via $\mathcal{F}_1 \otimes \mathcal{F}_2 \otimes \mathcal{F}_3$).

Let $T \in V_1 \otimes V_2 \otimes V_3$ be a tensor.
Its \emph{isotropy group} $\Gamma(T)$ is defined as the subgroup of $\operatorname{GL}(V_1) \times \operatorname{GL}(V_2) \times \operatorname{GL}(V_3)$ which leaves $T$ fixed.
Isotropy groups of bilinear maps and their action on the set of all bilinear algorithms were studied by de Groote~\cite{degroote-1978-on-varieties}.
Similarly, we define the \emph{$\varepsilon$-isotropy group} $\Gamma^{\varepsilon}(T)$ of $T$ as the subgroup of $\operatorname{GL}(V_1(\varepsilon)) \times \operatorname{GL}(V_2(\varepsilon)) \times \operatorname{GL}(V_3(\varepsilon))$ that fixes $T$ considered as a tensor over $k(\varepsilon)$.

Suppose $\mathcal{F} \in \operatorname{GL}(V(\varepsilon))$ is a $k(\varepsilon)$-linear map such that $\mathcal{F} = \operatorname{id} + O(\varepsilon)$.
Then for each $\mathcal{\scriptstyle V} \in V(\varepsilon)$ we have $\mathcal{F}\mathcal{\scriptstyle V} |_{\varepsilon = 0} = \mathcal{\scriptstyle V}|_{\varepsilon = 0}$, when 
$\mathcal{\scriptstyle V}|_{\varepsilon = 0}$ is defined.
Let $E(V_1, V_2, V_3)$ be the subgroup of $\operatorname{GL}(V_1(\varepsilon)) \times \operatorname{GL}(V_2(\varepsilon)) \times \operatorname{GL}(V_3(\varepsilon))$ consisting of all triples of such operators.

\begin{lemma}
    The groups $\Gamma(T')$ and $E(V'_1, V'_2, V'_3)$ act on $D(T' \trianglelefteq T)$ on the left and $\Gamma^{\varepsilon}(T)$ acts on the right via composition.
\end{lemma}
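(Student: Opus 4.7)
The plan is to unfold the definition of a degeneration operator and check in each case that the claimed composition still satisfies it, with the group axioms following automatically from associativity of composition. Recall from the discussion after Lemma \ref{lemma-algebraic-definition} that a degeneration operator for $T' \trianglelefteq T$ is a triple $(\mathcal{F}_1, \mathcal{F}_2, \mathcal{F}_3)$ of $k(\varepsilon)$-linear maps $\mathcal{F}_i \colon V_i(\varepsilon) \to V'_i(\varepsilon)$ such that $(\mathcal{F}_1 \otimes \mathcal{F}_2 \otimes \mathcal{F}_3) T$ is regular at $\varepsilon = 0$ and specializes to $T'$ there. I would take this as the condition to be preserved.

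For the left action of $\Gamma(T')$: given $(G_1, G_2, G_3) \in \Gamma(T')$, the $G_i$ are $k$-linear, hence can be viewed as $k(\varepsilon)$-linear automorphisms of $V'_i(\varepsilon)$ with $\varepsilon$-independent entries. Then $(G_1 \mathcal{F}_1 \otimes G_2 \mathcal{F}_2 \otimes G_3 \mathcal{F}_3) T = (G_1 \otimes G_2 \otimes G_3) \bigl((\mathcal{F}_1 \otimes \mathcal{F}_2 \otimes \mathcal{F}_3) T\bigr)$ remains regular at $0$ and evaluates to $(G_1 \otimes G_2 \otimes G_3) T' = T'$ by the defining property of $\Gamma(T')$. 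For the left action of $E(V'_1, V'_2, V'_3)$: since each $\mathcal{H}_i = \operatorname{id} + O(\varepsilon)$, the product $\mathcal{H}_1 \otimes \mathcal{H}_2 \otimes \mathcal{H}_3$ is regular at $0$ with value $\operatorname{id}$, so applying it to the tensor $(\mathcal{F}_1 \otimes \mathcal{F}_2 \otimes \mathcal{F}_3) T = T' + O(\varepsilon)$ keeps it regular and leaves the value at $\varepsilon = 0$ equal to $T'$. For the right action of $\Gamma^{\varepsilon}(T)$: for $(\mathcal{G}_1, \mathcal{G}_2, \mathcal{G}_3) \in \Gamma^{\varepsilon}(T)$ one computes $(\mathcal{F}_1 \mathcal{G}_1 \otimes \mathcal{F}_2 \mathcal{G}_2 \otimes \mathcal{F}_3 \mathcal{G}_3) T = (\mathcal{F}_1 \otimes \mathcal{F}_2 \otimes \mathcal{F}_3) \bigl((\mathcal{G}_1 \otimes \mathcal{G}_2 \otimes \mathcal{G}_3) T\bigr) = (\mathcal{F}_1 \otimes \mathcal{F}_2 \otimes \mathcal{F}_3) T$, which is $T' + O(\varepsilon)$ by assumption.

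Finally I would note that identity elements act trivially and group compositions correspond to composition of linear maps, so the group-action axioms are automatic, and the left and right actions commute because left- and right-multiplication commute. There is no real obstacle here; the only thing to be careful about is that each composition stays regular at $\varepsilon = 0$, which is exactly what the defining conditions of the three groups (either $\varepsilon$-independence of entries, or being $\operatorname{id} + O(\varepsilon)$, or fixing $T$ so that the inner tensor is unchanged) guarantee.
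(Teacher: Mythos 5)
Your proof is correct and matches the paper's argument: both verify case by case that composing with elements of $\Gamma(T')$, $E(V'_1,V'_2,V'_3)$, or $\Gamma^{\varepsilon}(T)$ preserves the relation $(\mathcal{F}_1\otimes\mathcal{F}_2\otimes\mathcal{F}_3)T = T' + O(\varepsilon)$, with the action axioms being automatic. Your write-up is just a more detailed rendering of the paper's one-line verification.
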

\begin{proof}
    Let $\mathcal{F} = \mathcal{F}_1 \otimes \mathcal{F}_2 \otimes \mathcal{F}_3$ be a degeneration operator for $T' \trianglelefteq T$, i.~e., $T' + O(\varepsilon) = \mathcal{F} T$. The described actions preserve this relation, since if $G \in \Gamma(T')$, then $GT' = T'$ and $G(O(\varepsilon)) = O(\varepsilon)$; if $\mathcal{G} \in E(V'_1, V'_2, V'_3)$, then $\mathcal{G}(T' + O(\varepsilon)) = T' + O(\varepsilon)$; and if $\mathcal{G} \in \Gamma^{\varepsilon}(T)$, then $\mathcal{G}T = T$.
\end{proof}

We use these transformations in case when $T$ is the structure tensor of some algebra.
Suppose $A$ is an algebra and $a, b, c$ are three invertible elements of $A$. Let $L_x$ and $R_x$ denote left and right multiplication by $x$ respectively.
Then $({(L_a R_b)}^*, {(L_{b}^{-1} R_c)}^*, L_a^{-1} R_c^{-1})$ is an element of the isotropy group $\Gamma(A)$ arising from the identity $xy = a^{-1}(axb)(b^{-1}yc)c^{-1}$.
The use of this identity is sometimes called \emph{sandwiching} in the literature.
Since the tensor over $k(\varepsilon)$ corresponding to $A$ is $A(\varepsilon) = A \otimes k(\varepsilon)$, an analogous expression with $a, b, c \in A(\varepsilon)$ can be used to construct elements of $\Gamma^{\varepsilon}(A)$.

\subsection{Main theorem}

\begin{theorem}
\label{thm-main}
    Let $A$ be an algebra and $\varphi \in A^* \otimes A^* \otimes A$ be a unital bilinear map.
    Then $\varphi \trianglelefteq A$ iff $\varphi \trianglelefteq_{\mathrm{a}} A$.
\end{theorem}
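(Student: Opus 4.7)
The $(\Leftarrow)$ direction is the easy one already observed in the paper: the operator $\mathcal{F}^*\otimes\mathcal{F}^*\otimes\mathcal{F}^{-1}$ of Lemma~\ref{lemma-algebraic} is in particular a degeneration operator. For $(\Rightarrow)$, the plan is to take an arbitrary degeneration operator $(\mathcal{G}_1,\mathcal{G}_2,\mathcal{G}_3)$ for $\varphi\trianglelefteq A$, written in bilinear-map form as $\mathcal{G}_3(\mathcal{G}_1(x)\mathcal{G}_2(y)) = \varphi(x,y) + O(\varepsilon)$ for $x,y\in A$ (with $\mathcal{G}_i$ $k(\varepsilon)$-linear on $A(\varepsilon)$), and, using only the $\Gamma^\varepsilon(A)$-action on the right and the $E$-action on the left from Section~3.1, reshape it into an operator of the form $(\mathcal{F},\mathcal{F},\mathcal{F}^{-1})$. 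The rephrasing of Lemma~\ref{lemma-algebraic} then yields $\varphi \trianglelefteq_{\mathrm{a}} A$.

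Let $e$ be the unit of $\varphi$. Substituting $x=e$ or $y=e$ shows that the $k(\varepsilon)$-linear compositions $\mathcal{G}_3 L_{\mathcal{G}_1(e)} \mathcal{G}_2$ and $\mathcal{G}_3 R_{\mathcal{G}_2(e)} \mathcal{G}_1$ agree with $\operatorname{id}+O(\varepsilon)$ on $A$, hence on $A(\varepsilon)$ by $k(\varepsilon)$-linearity, and are thus invertible. This forces $\mathcal{G}_1,\mathcal{G}_2,\mathcal{G}_3$ and the elements $a := \mathcal{G}_1(e),\ c := \mathcal{G}_2(e)\in A(\varepsilon)$ all to be invertible. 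I then apply the $\Gamma^\varepsilon(A)$-sandwich with parameters $(\alpha,\beta,\gamma) = (a^{-1},1,c^{-1})$; this replaces $\mathcal{G}_1$ by $L_{a^{-1}}\mathcal{G}_1$ and $\mathcal{G}_2$ by $R_{c^{-1}}\mathcal{G}_2$ (and correspondingly modifies $\mathcal{G}_3$), after which, continuing to write $\mathcal{G}_i$ for the new operator, we have $\mathcal{G}_1(e) = \mathcal{G}_2(e) = 1_A$ exactly in $A(\varepsilon)$.

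Substituting $x=e$ and $y=e$ into the normalized equation now yields $\mathcal{G}_3\mathcal{G}_1 = \operatorname{id}+O(\varepsilon)$ and $\mathcal{G}_3\mathcal{G}_2 = \operatorname{id}+O(\varepsilon)$, so
\[\mathcal{E} := \mathcal{G}_1^{-1}\mathcal{G}_2 = (\mathcal{G}_3\mathcal{G}_1)^{-1}(\mathcal{G}_3\mathcal{G}_2) = \operatorname{id}+O(\varepsilon)\]
without any regularity hypothesis on the individual $\mathcal{G}_i$. Since $\mathcal{E}^*$ is then a legitimate $E$-element, applying the $E$-action $(\mathcal{E}^*,\operatorname{id},\operatorname{id})$ on the left replaces $\mathcal{G}_1$ by $\mathcal{G}_1\mathcal{E} = \mathcal{G}_2$; after this we may assume $\mathcal{G}_1 = \mathcal{G}_2 =: \mathcal{F}$, with $\mathcal{G}_3\mathcal{F} = \operatorname{id}+O(\varepsilon)$ still holding. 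A final $E$-action $(\operatorname{id},\operatorname{id},(\mathcal{G}_3\mathcal{F})^{-1})$ replaces $\mathcal{G}_3$ by $\mathcal{F}^{-1}$, producing the desired degeneration operator $(\mathcal{F},\mathcal{F},\mathcal{F}^{-1})$; Lemma~\ref{lemma-algebraic} then concludes $\varphi \trianglelefteq_{\mathrm{a}} A$.

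The main obstacle I foresee is just bookkeeping: the group actions are defined on restriction-operator tuples $(F_1,F_2,F_3)$, while the equations are cleanest in the dualized bilinear-map tuples $(\mathcal{G}_1,\mathcal{G}_2,\mathcal{G}_3) = (F_1^*,F_2^*,F_3)$, so transpositions have to be tracked at each step. The observation that makes the argument go through cleanly is that every ``$\operatorname{id}+O(\varepsilon)$'' above refers to a \emph{composition} of the $\mathcal{G}_i$ that is read off directly from the degeneration equation, which avoids any need to assume that the individual $\mathcal{G}_i$ are regular at $\varepsilon=0$.
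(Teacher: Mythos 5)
Your proof is correct and takes essentially the same route as the paper: substitute the identity element of $\varphi$, use the sandwiching elements of $\Gamma^{\varepsilon}(A)$ together with the $E$-group action to transform an arbitrary degeneration operator into one of the form $\mathcal{F}^* \otimes \mathcal{F}^* \otimes \mathcal{F}^{-1}$ with $\mathcal{F}$ invertible over $k(\varepsilon)$, and conclude by Lemma~\ref{lemma-algebraic}. The only difference is the order of normalizations (you sandwich first to arrange $\mathcal{G}_1(e)=\mathcal{G}_2(e)=1$ and then equalize the first two maps and invert the third by $E$-actions, whereas the paper applies the $E$-actions first and a single sandwich at the end), and all the invertibility and $\operatorname{id}+O(\varepsilon)$ claims you use are justified.
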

\begin{proof}
    The implication $\varphi \trianglelefteq_{\mathrm{a}} A \Rightarrow \varphi \trianglelefteq A$ is obvious.
    Let us prove the opposite implication.
    
    Let $\varphi \trianglelefteq A$ and $\mathcal{F}^* \otimes \mathcal{G}^* \otimes \mathcal{H}$ be a degeneration operator, i.~e.,
    \[ \varphi(x,y) = \mathcal{H}(\mathcal{F} x \cdot \mathcal{G} y) |_{\varepsilon = 0}\quad \text{for all $x,y\in A$},\]
    where the multiplication is in $A \otimes k(\varepsilon)$.
    
    Let $e$ be the identity element of $\varphi$.
    After the substitution $x = e$ we have
    \[ y = \varphi(e, y) = \mathcal{H}(\mathcal{F} e \cdot \mathcal{G} y) |_{\varepsilon = 0} = \mathcal{H} L_{\mathcal{F} e} \mathcal{G} y |_{\varepsilon = 0}\quad \text{for all $y\in A$},\]
    so $\mathcal{Q} := \mathcal{H} L_{\mathcal{F} e} \mathcal{G} = \operatorname{id} + O(\varepsilon)$.
    Applying $(\operatorname{id}, {(\mathcal{Q}^{-1})}^*, \operatorname{id}) \in E(A^*, A^*, A)$ to the degeneration operator $\mathcal{F}^* \otimes \mathcal{G}^* \otimes \mathcal{H}$,
    we obtain a new degeneration operator $\mathcal{F}^* \otimes \hat{\mathcal{G}}^* \otimes \mathcal{H}$
    where $\hat{\mathcal{G}} = \mathcal{G} \mathcal{Q}^{-1} = L_{\mathcal{F} e}^{-1} \mathcal{H}^{-1}$.
    
    Analogously, setting $y = e$ we get that $\mathcal{P} := \mathcal{H} R_{\hat{\mathcal{G}} e} \mathcal{F} = \operatorname{id} + O(\varepsilon)$
    and using transformation $({(\mathcal{P}^{-1})}^*, \operatorname{id}, \operatorname{id}) \in E(A^*, A^*, A)$
    we get another degeneration operator $\hat{\mathcal{F}}^* \otimes \hat{\mathcal{G}}^* \otimes \mathcal{H}$
    where $\hat{\mathcal{F}} = \mathcal{F} \mathcal{P}^{-1} = R_{\hat{\mathcal{G}} e}^{-1} \mathcal{H}^{-1}$.
    
    Finally, we use a sandwiching transformation $({(L_{\mathcal{F}e}^{-1})}^*, {(R_{\hat{\mathcal{G}} e}^{-1})}^* , L_{\mathcal{F}e} R_{\hat{\mathcal{G}} e})$ from $\Gamma^{\varepsilon}(A)$
    and obtain a degeneration operator $\mathcal{S}^* \otimes \mathcal{S}^* \otimes \mathcal{S}^{-1}$
    where \[\mathcal{S} = {(\mathcal{H} L_{\mathcal{F}e} R_{\hat{\mathcal{G}}
    e})}^{-1}.\]
    By Lemma~\ref{lemma-algebraic} we have an algebraic degeneration $\varphi \trianglelefteq_{\mathrm{a}} A$.
\end{proof}

This theorem can be seen as an extension of the fact that associative algebras have equivalent structure tensors iff they are isomorphic (\cite[Prop.~14.13]{buergisser-1997-book}). The general idea of the proof --- using symmetries of the tensors to transform maps that express the relationship between them --- goes back to de Groote~\cite{degroote-1978-on-varieties}, but in our case some care needed to track the behaviour of degeneration operators as $\varepsilon$ varies.

\subsection{Tensors of minimal border rank}

A special case of Theorem~\ref{thm-main} when the algebra $A$ is $k^r$ can be used to study tensors of minimal border rank. First, we describe algebras of minimal border rank:

\begin{corollary}
    A unital bilinear map on a vector space of dimension $n$ is of minimal border rank iff it is a multiplication in a smoothable algebra.
\end{corollary}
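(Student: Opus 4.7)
The plan is to read the corollary as a direct specialization of Theorem~\ref{thm-main} to the algebra $A = k^n$, combined with the identification $\underline{R}(\varphi) \leq n \Leftrightarrow \varphi \trianglelefteq k^n$ and the observation that unital bilinear maps are automatically concise.

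For the forward direction, suppose $\varphi$ is unital on $V$ with $\dim V = n$ and $\underline{R}(\varphi) = n$. Fix any linear isomorphism $V \cong k^n$ so that $\varphi$ may be regarded as a tensor in $(k^n)^* \otimes (k^n)^* \otimes k^n$. The hypothesis $\underline{R}(\varphi) \leq n$ translates to $\varphi \trianglelefteq k^n$, and Theorem~\ref{thm-main} applied with $A = k^n$ upgrades this to $\varphi \trianglelefteq_{\mathrm{a}} k^n$. Since associativity and commutativity cut out closed subsets of $V^* \otimes V^* \otimes V$ and $k^n$ enjoys both properties, the degeneration $\varphi$ inherits them. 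Combined with the given identity element, $\varphi$ thus endows $V$ with a unital, associative, commutative algebra structure lying in $\overline{\operatorname{GL}(V)\cdot k^n}$, and this is exactly the definition of smoothability.

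The reverse direction is essentially free. If $\varphi$ is the multiplication of a smoothable $n$-dimensional algebra $A$, then $A \trianglelefteq_{\mathrm{a}} k^n$ by definition and hence $A \trianglelefteq k^n$, because algebraic degeneration implies tensor degeneration (as noted in the preliminaries); consequently $\underline{R}(\varphi) \leq n$. For the matching lower bound one invokes conciseness of $\varphi$: the two-sided identity $e$ gives $L_x(e) = x$ and $R_y(e) = y$, so the maps $x \mapsto L_x$ and $y \mapsto R_y$ are injective, and $\varphi(e, \cdot) = \operatorname{id}$ makes $\varphi\colon V \otimes V \to V$ surjective. All three flattenings of $\varphi$ are therefore nondegenerate, so $\varphi$ is concise and the general bound yields $\underline{R}(\varphi) \geq \max_i \dim V_i = n$.

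No serious obstacle appears: once Theorem~\ref{thm-main} is available, the corollary is pure bookkeeping. The two points worth flagging are that algebraic degenerations of $k^n$ genuinely land in the class of associative commutative algebras (ensured by the closedness remark preceding the definition of smoothability), and that ``unital'' implies ``concise'', which is immediate from the existence of a two-sided identity.
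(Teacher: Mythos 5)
Your proof is correct and follows essentially the same route as the paper: apply Theorem~\ref{thm-main} with $A = k^n$ to translate $\underline{R}(\varphi) \leq n$, i.e.\ $\varphi \trianglelefteq k^n$, into $\varphi \trianglelefteq_{\mathrm{a}} k^n$, which is the definition of smoothability. You merely spell out details the paper leaves implicit (unital implies concise, and closedness of associativity/commutativity), which is fine.
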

\begin{proof}
    By Theorem~\ref{thm-main} in the present case it is equivalent to $\varphi \trianglelefteq_{\mathrm{a}} k^n$, which is the definition of a smoothable algebra.
\end{proof}

For example, if $\operatorname{char} k \neq 2, 3$,
the following algebras are smoothable~\cite{cartwright-erman-velasco-viray-2009-hilbert}, and, therefore, have minimal border rank:
\begin{enumerate}
    \item any algebra generated by $2$ elements;
    \item any algebra of the form $k[x_1,\dots,x_d]/I$ where the ideal $I$ is monomial;
    \item any algebra with $\dim (R^2/R^3) = 1$ where $R = \operatorname{rad} A$;
    \item any algebra with $\dim (R^2/R^3) = 2$, $\dim R^3 \leq 2$ and $R^4 = 0$ where $R = \operatorname{rad} A$;
    \item any algebra of dimension $7$ or less;
\end{enumerate}

A description of smoothable algebras of dimension $8$ is contained in~\cite{cartwright-erman-velasco-viray-2009-hilbert,erman-velasco-2010-syzygetic}.

Using the description of algebras of minimal border rank, we can identify a certain open subset of tensors of minimal border rank.

\begin{definition}
\label{def-binding}
    A tensor $T \in V_1 \otimes V_2 \otimes V_3$ of format $n \times n \times n$ is called \emph{binding} if there are elements $\alpha_1 \in V_1^*$ and $\alpha_2 \in V_2^*$ such that the contractions $T \alpha_1 \in V_2 \otimes V_3$ and $T \alpha_2 \in V_1\otimes V_2$ have rank $n$.
\end{definition}

Note that a generic tensor of format $n \times n \times n$ is binding.
In the terminology of~\cite{landsberg-michalek-2015-abelian} binding tensors are
called $1_{V_1}$- and $1_{V_2}$-generic.
We call these tensors binding because they allow us to relate spaces $V_1$ and $V_2$ to $V_3$ similarly to how a nondegenerate bilinear form allows to view spaces of its arguments as dual to each other. This is used in the proof of the following lemma.

\begin{lemma}
\label{lemma-binding}
    A binding tensor is equivalent to an unital bilinear map.
\end{lemma}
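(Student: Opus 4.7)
The plan is to use the two rank-$n$ contractions to identify $V_1$ and $V_2$ with $V_3^*$, thereby turning $T$ into a bilinear map on $V_3$ which will turn out to be unital; since the identifications are isomorphisms, the resulting tensor is equivalent to $T$ by construction.

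Concretely, I view $T$ as a bilinear map $T \colon V_1^* \times V_2^* \to V_3$. Under the natural identification $V_2 \otimes V_3 \cong \mathrm{Hom}(V_2^*, V_3)$ the contraction $T\alpha_1$ is exactly the linear map $A := T(\alpha_1, \cdot) \colon V_2^* \to V_3$, and the rank-$n$ hypothesis (with $n = \dim V_3$) says $A$ is an isomorphism; likewise $B := T(\cdot, \alpha_2) \colon V_1^* \to V_3$ is an isomorphism. I then set $F_1 := (B^*)^{-1} \colon V_1 \to V_3^*$, $F_2 := (A^*)^{-1} \colon V_2 \to V_3^*$, and $F_3 := \mathrm{id}_{V_3}$; all three are invertible. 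Let $\varphi := (F_1 \otimes F_2 \otimes F_3) T \in V_3^* \otimes V_3^* \otimes V_3$. Unwinding the duality, reading $\varphi$ as a bilinear map $V_3 \times V_3 \to V_3$ gives
\[
\varphi(x, y) \;=\; T(F_1^* x,\, F_2^* y) \;=\; T(B^{-1} x,\, A^{-1} y).
\]

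To finish I exhibit an identity element. The candidate is $e := T(\alpha_1, \alpha_2) \in V_3$, which by definition equals both $A(\alpha_2)$ and $B(\alpha_1)$. Hence $B^{-1} e = \alpha_1$ and $A^{-1} e = \alpha_2$, so
\[
\varphi(e, y) = T(\alpha_1, A^{-1} y) = A(A^{-1} y) = y
\quad\text{and}\quad
\varphi(x, e) = T(B^{-1} x, \alpha_2) = B(B^{-1} x) = x.
\]
Thus $\varphi$ is unital with identity $e$, and $T \sim \varphi$ via the bijective restriction operator $F_1 \otimes F_2 \otimes F_3$.

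The only point requiring care is the variance of the dualizations: $F_1$ and $F_2$ must be set up so that their transposes are precisely the inverse contractions $B^{-1}$ and $A^{-1}$, which is why one takes the inverse of the \emph{dual} of $B$ and $A$ rather than these maps themselves. Past this bookkeeping the proof is a one-line verification, and I do not foresee any genuine obstacle.
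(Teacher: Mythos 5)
Your proof is correct and follows essentially the same route as the paper: the paper also turns the two rank-$n$ contractions into isomorphisms $P_1,P_2$ onto $V_3$, applies $(P_2^{-1})^{*}\otimes(P_1^{-1})^{*}\otimes\operatorname{id}$ to $T$, and identifies the unit as $P_2\alpha_1=P_1\alpha_2$, which is exactly your $e=T(\alpha_1,\alpha_2)$. The careful handling of the dualizations (taking inverses of the transposes so that the induced maps on arguments are the inverse contractions) matches the paper's construction as well.
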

\begin{proof}
    Let $\dim V_1 = \dim V_2 = \dim V_3 = n$ and $T\in V_1\otimes V_2\otimes V_3$ be a binding tensor. Let $\alpha_1 \in V_1^*$ and $\alpha_2 \in V_2^*$ be as in Definition~\ref{def-binding}.
    
    We can view $T \alpha_1$ and $T \alpha_2$ as linear isomorphisms $P_1 \colon V_1^* \to V_3$ and $P_2 \colon V_2^* \to V_3$.
    Applying ${(P_2^{-1})}^* \otimes {(P_1^{-1})}^* \otimes \operatorname{id}$ to $T$
    we get an equivalent bilinear map
    \[\varphi(x_1, x_2) =  T(P_2^{-1} x_1)(P_1^{-1} x_2).\]
    This bilinear map is unital, since $\varphi(P_2 \alpha_1, x) = x$ and $\varphi(x, P_1\alpha_2) = x$ for all $x\in V_1$,
    so \[P_2\alpha_1 = \varphi(P_2\alpha_1, P_1\alpha_2) = P_1\alpha_2\] is the identity element.
\end{proof}

\begin{corollary}
\label{cor-binding}
    A binding tensor has minimal border rank iff it is equivalent to a smoothable algebra.
\end{corollary}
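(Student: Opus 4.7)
My plan is to chain together Lemma~\ref{lemma-binding}, Theorem~\ref{thm-main} specialized to $A = k^n$, and the definition of smoothability. Since $T$ is binding, Lemma~\ref{lemma-binding} lets me replace $T$ by an equivalent unital bilinear map $\varphi \in V^* \otimes V^* \otimes V$ with $\dim V = n$. Because both border rank and ``equivalence to a smoothable algebra'' are invariant under tensor equivalence, it suffices to prove the two conditions coincide for $\varphi$.

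For the forward direction, I would assume $\underline{R}(\varphi) = n$. By the standard correspondence between border rank and degenerations of $k^r$ noted after Lemma~\ref{lemma-algebraic-definition}, this is exactly $\varphi \trianglelefteq k^n$. Theorem~\ref{thm-main} with $A = k^n$ then upgrades this to an algebraic degeneration $\varphi \trianglelefteq_{\mathrm{a}} k^n$. Since commutativity and associativity cut out closed subsets of $V^* \otimes V^* \otimes V$ and both hold for $k^n$, every element of $\overline{\operatorname{GL}(V)\cdot k^n}$ is commutative and associative; combined with $\varphi$ being unital, this makes $\varphi$ a commutative associative unital algebra with $\varphi \trianglelefteq_{\mathrm{a}} k^n$, which is precisely the definition of a smoothable algebra.

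For the backward direction, I would start from $T \sim A$ with $A$ smoothable of dimension $n$. By definition $A \trianglelefteq_{\mathrm{a}} k^n$, and this trivially implies the tensor degeneration $A \trianglelefteq k^n$; hence $\underline{R}(A) \leq n$ and therefore $\underline{R}(T) \leq n$. For the matching lower bound, I would observe that a binding tensor is automatically concise — each rank-$n$ contraction $T\alpha_i$ certifies that $T$ is not contained in any proper subspace of the corresponding tensor factor — and a concise tensor of format $n \times n \times n$ has border rank at least $n$. This gives $\underline{R}(T) = n$, so $T$ is of minimal border rank.

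I do not expect any real obstacle: the substantive content has already been packaged into Theorem~\ref{thm-main} and Lemma~\ref{lemma-binding}, and what remains is definition-chasing together with the closedness of the defining algebraic conditions. The only step that deserves care is checking that the $\varphi$ produced in the forward direction really qualifies as an associative unital algebra so that the definition of smoothability applies, which is why I flag the closed-orbit argument explicitly.
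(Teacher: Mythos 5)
Your proof is correct and follows essentially the same route as the paper: reduce to a unital bilinear map via Lemma~\ref{lemma-binding}, then apply Theorem~\ref{thm-main} with $A = k^n$ (the paper packages this step as the preceding corollary on unital bilinear maps of minimal border rank), using the closedness of associativity and commutativity and the conciseness/lower-bound observation exactly as the paper does implicitly.
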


These results suggest that structure tensors of smoothable algebras are possible candidates for basic blocks to construct fast matrix multiplication algorithms. We tried to use some of them in the same framework that is used by Coppersmith and Winograd (it is known as ``laser method'', see~\cite{buergisser-1997-book,legall-2014-powers} for more information). So far, these attempts did not lead to improved matrix multiplication algorithms.

\subsection{Example: Coppersmith-Winograd tensor}
\label{section-cw}

Let $e^{[0]}, e^{[1]}_1, \dots, e^{[1]}_q, e^{[2]}$ be a basis of a $(q + 2)$-dimensional vector space,
and $\alpha^{[0]}, \alpha^{[1]}_i, \alpha^{[2]}$ be the dual basis. The famous Coppersmith-Winograd algorithm~\cite{coppersmith-winograd-1990-matrix} uses the tensor
\begin{equation*}
\begin{aligned}
T_{CW} = \sum_{i = 1}^q
( & e^{[0]} \otimes e^{[1]}_i \otimes e^{[1]}_i +
  e^{[1]}_i \otimes e^{[0]} \otimes e^{[1]}_i + 
  e^{[1]}_i \otimes e^{[1]}_i \otimes e^{[0]}) + \\
+ & e^{[0]} \otimes e^{[0]} \otimes e^{[2]} +
e^{[0]} \otimes e^{[2]} \otimes e^{[0]} +
e^{[2]} \otimes e^{[2]} \otimes e^{[0]},
\end{aligned}
\end{equation*}
which we will call \emph{Coppersmith-Winograd tensor}.

We can use the results of the previous section to exhibit a smoothable algebra with the structure tensor equivalent to the Coppersmith-Winograd tensor.

The Coppersmith-Winograd tensor is a tensor of minimal border rank,
as witnessed by the approximate decomposition
\begin{equation}
\begin{aligned}
\label{eqn-cw}
T_{CW} & + O(\varepsilon) =
\varepsilon^{-2} \sum_{i = 1}^q
(e^{[0]} + \varepsilon e^{[1]}_i) \otimes
(e^{[0]} + \varepsilon e^{[1]}_i) \otimes
(e^{[0]} + \varepsilon e^{[1]}_i) - \\
& - \varepsilon^{-3}
(e^{[0]} + \varepsilon^2 \sum_{i = 1}^q e^{[1]}_i) \otimes
(e^{[0]} + \varepsilon^2 \sum_{i = 1}^q e^{[1]}_i) \otimes
(e^{[0]} + \varepsilon^2 \sum_{i = 1}^q e^{[1]}_i) + \\
& + (\varepsilon^{-3} - q \varepsilon^{-2})
(e^{[0]} + \varepsilon^3 e^{[2]}) \otimes
(e^{[0]} + \varepsilon^3 e^{[2]}) \otimes
(e^{[0]} + \varepsilon^3 e^{[2]}).
\end{aligned}
\end{equation}

The Coppersmith-Winograd tensor $T_{CW}$ is binding (the layers corresponding to $\alpha^{[0]}$ have full rank). Applying Lemma~\ref{lemma-binding}, we obtain
a bilinear map
\begin{equation*}
\begin{aligned}
    \sum_{i = 1}^q (&\alpha^{[2]} \otimes \alpha^{[1]}_i \otimes e^{[1]}_i +
  \alpha^{[1]}_i \otimes \alpha^{[2]} \otimes e^{[1]}_i + \alpha^{[1]}_i
  \otimes \alpha^{[1]}_i \otimes e^{[0]}) + \\ + &\alpha^{[2]} \otimes
  \alpha^{[2]} \otimes e^{[2]} + \alpha^{[2]} \otimes \alpha^{[0]} \otimes
  e^{[0]} + \alpha^{[0]} \otimes \alpha^{[0]} \otimes e^{[0]}
\end{aligned}
\end{equation*}
which is unital with the identity $e^{[2]}$.
By Corollary~\ref{cor-binding} this map is a multiplication in some smoothable algebra.
Denote $e^{[2]}$ by $1$ and $e^{[1]}_i$ by $x_i$.
In this notation, $x_i x_j = 0$ for $i\neq j$ and $e^{[0]}$ corresponds to $x_1^2 = x_2^2 = \dots = x_q^2$. To summarize,

\begin{example}
    The Coppersmith-Winograd tensors is equivalent to the smoothable algebra $A_{CW} \cong k[x_1,\dots,x_q]/\left<x_i x_j, x_i^2 - x_j^2, x_i^3 \mid i\neq j\right>$.
\end{example}

Performing transformations described in the proof of Theorem~\ref{thm-main} for
the decomposition~\eqref{eqn-cw}, we can construct an algebraic degeneration of
$k^{d + 2}$ to $A_{CW}$ given by a degeneration operator $\mathcal{S}^{*}
\otimes \mathcal{S}^{*} \otimes \mathcal{S}^{-1}$ where $\mathcal{S} \colon
A_{CW}(\varepsilon) \to {k(\varepsilon)}^{q+2}$ has the following matrix
relative to the basis $\{1, x_1, \dots, x_q, x_1^2\}$ in $A_{CW}$ and the
standard basis in $k^{q + 2}$:
\begin{equation*}
    \scriptstyle
    \left[ \begin{array}{c|ccccc|c}
        1 & \varepsilon - (q - 1) \varepsilon^2 & \varepsilon^2 & \varepsilon^2 & \cdots & \varepsilon^2 & -\varepsilon^3 \\
        1 & \varepsilon^2 & \varepsilon - (q - 1) \varepsilon^2 & \varepsilon^2 & \cdots & \varepsilon^2 & -\varepsilon^3 \\
        1 & \varepsilon^2 & \varepsilon^2 & \varepsilon - (q - 1) \varepsilon^2 & \cdots & \varepsilon^2 & -\varepsilon^3 \\
        \vdots & \vdots & \vdots & \vdots & \ddots & \vdots & \vdots \\
        1 & \varepsilon^2 & \varepsilon^2 & \varepsilon^2 & \cdots & \varepsilon - (q - 1) \varepsilon^2 & -\varepsilon^3 \\
        1 & \varepsilon^2 & \varepsilon^2 & \varepsilon^2 & \cdots & \varepsilon^2 & - \varepsilon^3 \\
        1 & 0 & 0 & 0 & \cdots & 0 & 0
    \end{array} \right].
\end{equation*}
We may simplify this matrix by applying a certain linear map of the form
$\operatorname{id} + O(\varepsilon)$, obtaining a new degeneration
corresponding to a matrix
\begin{equation}
\label{eqn-matrix}
    \left[ \begin{array}{c|ccccc|c}
    1 & \varepsilon & 0 & 0 & \cdots & 0 & -\varepsilon^3 \\
    1 & 0 & \varepsilon & 0 & \cdots & 0 & -\varepsilon^3 \\
    1 & 0 & 0 & \varepsilon & \cdots & 0 & -\varepsilon^3 \\
    \vdots & \vdots & \vdots & \vdots & \ddots & \vdots & \vdots \\
    1 & 0 & 0 & 0 & \cdots & \varepsilon & -\varepsilon^3 \\
    1 & \varepsilon^2 & \varepsilon^2 & \varepsilon^2 & \cdots & \varepsilon^2 & - \varepsilon^3 \\
    1 & 0 & 0 & 0 & \cdots & 0 & 0
    \end{array} \right].
\end{equation}

In the language of schemes this degeneration can be interpreted as follows:
the $0$-dimensional scheme $\mathbf{S}_{CW}$ with coordinate ring $A_{CW}$ is the flat limit of the family (parameterized by $\varepsilon$) of schemes containing $q + 2$ points in $(q+2)$-dimensional affine space with coordinates given by
the rows of the matrix~\eqref{eqn-matrix}.

Since $A_{CW}$ is generated by $q$ elements, $\mathbf{S}_{CW}$ is contained in a $q$-dimensional affine subspace, so we can consider instead of schemes in $(q + 2)$-dimensional space their projections to this subspace, which corresponds to the middle part of~\eqref{eqn-matrix}.

For those unfamiliar with the terminology of schemes, here is an algorithmic interpretation:
to approximately multiply two elements of $A_{CW}$,
evaluate the corresponding polynomials of the form $a^{[0]} + \sum a^{[1]}_i x_i + a^{[2]} x_1^2$ at the $q + 2$ points given by the middle part of the matrix, multiply the corresponding values, and interpolate the products to get a resulting polynomial.

\section{Substitution method for border rank}

In this section we describe a method for obtaining lower bounds which can be seen as a border rank version of the substitution method for tensor rank.
Let $T\in U^* \otimes V \otimes W$.
We can view it as a linear map $U\to V\otimes W$ and consider the restriction $T|_{U'} \in (U')^* \otimes V \otimes W$ for any subspace $U' \subset U$. If the border ranks of $T|_{U'}$ are known, we can derive the bound on the border rank of $T$.

\begin{theorem}
    Let $T\in U^* \otimes V \otimes W$ and $\dim U = n$. For any $d$ we have
    \[\underline{R}(T) \geq n - d + \min \{ \underline{R}(T|_{U'}) \mid U' \subset U,\,\dim U' = d\}.\]
\end{theorem}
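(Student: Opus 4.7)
The plan is to lift the classical substitution method for tensor rank to border rank by running the argument over $k(\varepsilon)$ and then specializing back to $k$. In the exact case one takes an optimal decomposition $T = \sum_{s=1}^{r} u_s \otimes v_s \otimes w_s$ with $u_s \in U^*$, picks any $n-d$ of the covectors $u_s$, and restricts $T$ to their common kernel $U' \subset U$: the kernel has dimension at least $d$, only $r-(n-d)$ summands survive the restriction, so $R(T|_{U'}) \leq r-(n-d)$, which rearranges to the claimed inequality for rank. The task is to transport this idea across the closure.

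Following this outline, I would set $r = \underline{R}(T)$ and, by Lemma~\ref{lemma-algebraic-definition}, fix an approximate decomposition
\[T + O(\varepsilon) = \sum_{s=1}^{r} \mathcal{u}_s \otimes \mathcal{v}_s \otimes \mathcal{w}_s\]
with $\mathcal{u}_s \in U^{*}(\varepsilon)$. Then I would let $\mathcal{U}''$ be any $k(\varepsilon)$-subspace of $U(\varepsilon)$ of dimension exactly $d$ contained in $\bigcap_{s=1}^{n-d}\ker \mathcal{u}_s$; such a subspace exists because the intersection has codimension at most $n-d$ in $U(\varepsilon)$. Restricting the approximate decomposition to $\mathcal{U}''$ kills the first $n-d$ summands, so $T|_{\mathcal{U}''}$, viewed as a tensor over $k(\varepsilon)$, has border rank at most $r-(n-d)$.

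The key step is to descend $\mathcal{U}''$ to an honest $d$-dimensional $k$-subspace $U' \subset U$. For this I would invoke properness of the Grassmannian $\operatorname{Gr}(d, U)$: the $k(\varepsilon)$-point $\mathcal{U}''$ extends uniquely to a $k[[\varepsilon]]$-point of $\operatorname{Gr}(d, U)$, which concretely means one can choose a basis $\mathcal{b}_1, \dots, \mathcal{b}_d$ of $\mathcal{U}''$ with entries in $k[\varepsilon]$ whose specializations $b_i = \mathcal{b}_i|_{\varepsilon = 0}$ remain linearly independent in $U$; set $U' = \operatorname{span}(b_1, \dots, b_d)$. Equivalently, this is a Smith-normal-form or row-reduction argument over the discrete valuation ring $k[[\varepsilon]]$ applied to any initial matrix whose rows span $\mathcal{U}''$.

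Finally I would use that the variety of tensors in $k^d \otimes V \otimes W$ of border rank at most $r-(n-d)$ is Zariski closed and defined over $k$. Expressing $T|_{\mathcal{U}''}$ and $T|_{U'}$ in the bases $\{\mathcal{b}_i\}$ and $\{b_i\}$ identifies them respectively with a $k(\varepsilon)$-point of this variety and its specialization at $\varepsilon = 0$, and a $k(\varepsilon)$-point of a variety over $k$ specializes to a $k$-point of the same variety. Hence $\underline{R}(T|_{U'}) \leq r-(n-d)$, so $\min_{\dim U' = d} \underline{R}(T|_{U'}) \leq r-(n-d)$, which is the desired inequality. The main technical obstacle is precisely the descent from $\mathcal{U}''$ to $U'$: naively setting $\varepsilon = 0$ in basis vectors of $\mathcal{U}''$ can collapse dimension, and the use of projectivity of $\operatorname{Gr}(d, U)$ (or, equivalently, a normal-form argument over $k[[\varepsilon]]$) is what makes the descent rigorous.
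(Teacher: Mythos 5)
Your proposal is correct and is essentially the paper's own argument recast in generic-point language: restrict an approximate decomposition to a $d$-dimensional subspace over $k(\varepsilon)$ annihilating $n-d$ of the $U^*$-factors, descend that subspace to $\varepsilon=0$ using properness/projectivity of $\mathbf{Gr}(d,U)$ (the paper phrases this as extending the curve of subspaces $U'_\varepsilon$ to $\varepsilon=0$), and conclude via closedness of the border-rank locus under specialization. The only phrase to tighten is the intermediate claim about $T|_{\mathcal{U}''}$: what has rank at most $r-(n-d)$ over $k(\varepsilon)$ is the restriction of the approximating tensor $(T+O(\varepsilon))|_{\mathcal{U}''}$, not of $T$ itself (an $O(\varepsilon)$ perturbation is not small in the Zariski topology over $k(\varepsilon)$), but since the basis you extract from the valuative-criterion/Smith-normal-form step is regular at $\varepsilon=0$ with independent specializations, contracting the approximating tensor with that basis still specializes to $T|_{U'}$, so your final specialization step goes through unchanged.
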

\begin{proof}
    Suppose $\underline{R}(T) = r$.
    We can assume that $T$ is concise, considering it as an element of a smaller subspace $(U')^* \otimes V' \otimes W' \subset U^* \otimes V \otimes W$ otherwise.
    We need to show that there exists a subspace $U' \subset U$, $\dim U' = d$, such that $\underline{R}(T|_{U'}) \leq r - n + d$.

    Note that this is true for tensors $T$ of rank $r$. 
    Indeed, let $T = \sum_{s = 1}^r f_s \otimes v_s \otimes w_s$ be a polyadic decomposition.
    Without loss of generality, $f_1, \dots, f_n$ form a basis of $U^*$,
    and for the $d$-dimensional subspace $U' \subset U$ defined by the equations $f_i = 0$ for~$1 \leq i \leq n - d$
    we have $\underline{R}(T|_{U'}) \leq R(T|_{U'}) \leq r - n + d$, since the first $n - d$ terms of the decomposition vanish on $U'$.
    
    Moreover, if we have an approximate decomposition \[T + O(\varepsilon) = \sum_{s = 1}^r f_s(\varepsilon) \otimes v_s(\varepsilon) \otimes w_s(\varepsilon) = \mathcal{T},\]
    we can assume that $f_1(\varepsilon), \dots, f_n(\varepsilon)$ are linearly independent for almost all values of $\varepsilon$ (because concise tensors form an open set),
    and obtain a family of subspaces $U'_{\varepsilon}$ such that $\mathcal{T}(\varepsilon)|_{U'_{\varepsilon}}$ has rank at most $r - n + d$.
    The family $U'_{\varepsilon}$ defines an algebraic curve in the Grassmannian $\mathbf{Gr}(d, U)$.
    Grassmannians are projective varieties, so $U'_{\varepsilon}$ can be extended to $\varepsilon = 0$ (see, for example,~\cite[Rem.~7.12, Thm.~7.22]{milne-2015-AG}).
    
    Given an isomorphism $F\colon k^d \to U' \subset U$ and a tensor $T$, we can define $\hat{T} \in (k^d)^* \otimes V \otimes W$ as $\hat{T}(p) = T(Fp)$ so that $\hat{T} \sim T|_{U'}$ and the map $Z \colon (T, F) \mapsto \hat{T}$ is algebraic. In the neighborhood of $U'_0$ we can choose isomorphisms $\mathcal{F} \colon k^d \to U'_{\varepsilon}$ which vary continuously with~$\varepsilon$.
    Using these isomorphisms, we include $T|_{U'_0}$ in an algebraic family $Z(\mathcal{T}, \mathcal{F})$ of tensors of rank at most $r - n + d$, therefore, its border rank does not exceed this value.
\end{proof}

Essentially the same method was independently described by Landsberg and Micha\l{}ek~\cite{landsberg-michalek-2016-geometry}.
They prove this lower bound when $U'$ is a hyperplane in~$U$ (from which the general version follows easily) and use it to obtain a lower bound on the rank of matrix multiplication. We consider the other extremal case where $U' = \left<u\right>$ is $1$-dimensional. In this case $T|_{U'}$ is essentially the matrix $Tu \in V \otimes W$ and, since for matrices rank and border rank coincide, we have

\begin{corollary}
\label{cor-substitution}
    $\underline{R}(T) \geq n - 1 + m(T)$ where $m(T) = \min\limits_{u \in U\setminus\{0\}} \operatorname{rk} (Tu)$.
\end{corollary}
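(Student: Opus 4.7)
The plan is to specialize the preceding theorem to the case $d = 1$. A one-dimensional subspace $U' \subset U$ has the form $U' = \langle u \rangle$ for some nonzero $u \in U$, and as $u$ ranges over $U \setminus \{0\}$ the resulting $U'$ ranges over all one-dimensional subspaces. The minimum in the theorem becomes $\min_{u \neq 0} \underline{R}(T|_{\langle u \rangle})$, and the reparameterization from lines to nonzero vectors does not change this minimum, since $T|_{\langle u \rangle}$ depends on $u$ only up to equivalence under rescaling.

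Next I would identify the restricted tensor $T|_{\langle u \rangle}$ with the matrix $Tu \in V \otimes W$. Under the natural isomorphism $(\langle u \rangle)^* \otimes V \otimes W \cong V \otimes W$ induced by the basis dual to $u$, the tensor $T|_{\langle u \rangle}$ is precisely the contraction $Tu$. At this point I would invoke the classical fact that rank and border rank agree for $2$-tensors: the set of matrices of rank at most $r$ is cut out by the vanishing of $(r+1)\times(r+1)$ minors, hence is Zariski closed and coincides with its own closure. Therefore $\underline{R}(T|_{\langle u \rangle}) = \underline{R}(Tu) = \operatorname{rk}(Tu)$.

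Substituting this identification into the theorem with $d = 1$ gives
\[ \underline{R}(T) \geq n - 1 + \min_{u \in U \setminus \{0\}} \operatorname{rk}(Tu) = n - 1 + m(T), \]
which is the claimed bound. There is no real obstacle in this derivation; the entire content lies in the preceding theorem, and the corollary only contributes the (standard) collapse of border rank to rank for $2$-tensors together with the observation that one-dimensional subspaces are parameterized by nonzero vectors up to scaling.
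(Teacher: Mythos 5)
Your proposal is correct and follows the same route as the paper: specialize the theorem to $d = 1$, identify $T|_{\langle u\rangle}$ with the matrix $Tu \in V \otimes W$, and use the coincidence of rank and border rank for matrices. The paper states this in one line; you have merely filled in the routine details (closedness of determinantal varieties, scaling-invariance of the minimum), which is fine.
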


\subsection{Border rank of the easy Coppersmith-Winograd tensor}

In \cite{coppersmith-winograd-1990-matrix}, Coppersmith and Winograd first describe a simplified version of the main construction. This ``easy version'' uses the tensor
\begin{equation*}
T_{cw} = \sum_{i = 1}^q
(e^{[0]} \otimes e^{[1]}_i \otimes e^{[1]}_i + 
e^{[1]}_i \otimes e^{[0]} \otimes e^{[1]}_i + 
e^{[1]}_i \otimes e^{[1]}_i \otimes e^{[0]}),
\end{equation*}
with $q \ge 2$, which we will call \emph{easy Coppersmith-Winograd tensor}.

The easy Coppersmith-Winograd tensor is a restriction of the full Coppersmith-Winograd tensor obtained using the projection along $e^{[2]}$ onto $\left<e^{[0]}, e^{[1]}_i\right>$, so its border rank is at most $q + 2$.
It is known that this is the exact value of $\underline{R}(T_{cw})$ (see~\cite[Exercise 15.14(3)]{buergisser-1997-book}).

We can write a bilinear map equivalent to $T_{cw}$ in terms of the algebra $A_{CW}$ described in~\S\ref{section-cw}. Let $X$ be the subspace of $A_{CW}$ spanned by $x_i$, $M$ be the subspace spanned by $1$ and $X$, and $R$ be the radical of $A_{CW}$ (the subspace spanned by $x_i$ and $x_1^2$). Denote by $\rho$ the projection of $A_{CW}$ onto $R$ along $1$. Then $T_{cw}$ is equivalent to the bilinear map $\varphi_{cw} \in M^* \otimes M^* \otimes R$ defined as $\varphi_{cw}(a, b) = \rho(ab)$ (the multiplication is in $A_{CW}$).

\begin{lemma}
    Let $q \geq 2$. For any $\psi \in U^* \otimes V^* \otimes W$, we have $m(\varphi_{cw} \otimes \psi) \geq 2m(\psi)$.
\end{lemma}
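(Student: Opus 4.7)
The plan is to fix a nonzero $z \in M \otimes U$ and directly compute the slice $(\varphi_{cw} \otimes \psi) z$ as a block matrix, then bound its rank from below. Expanding $z$ in the basis $\{1, x_1, \dots, x_q\}$ of $M$ as $z = 1 \otimes u_0 + \sum_{i = 1}^q x_i \otimes u_i$ with $u_0, u_1, \dots, u_q \in U$, and applying the multiplication rules $\rho(1) = 0$, $\rho(x_j) = x_j$, $\rho(x_i x_j) = \delta_{ij} x_1^2$ (which follow from $x_i x_j = 0$ for $i \neq j$ and $x_i^2 = x_1^2$ in $A_{CW}$), the slice $(\varphi_{cw} \otimes \psi) z \colon M \otimes V \to R \otimes W$ takes the block form
\[
\left(\begin{array}{c|cccc}
\psi u_1 & \psi u_0 & 0 & \cdots & 0 \\
\psi u_2 & 0 & \psi u_0 & \cdots & 0 \\
\vdots & \vdots & \vdots & \ddots & \vdots \\
\psi u_q & 0 & 0 & \cdots & \psi u_0 \\
\hline
0 & \psi u_1 & \psi u_2 & \cdots & \psi u_q
\end{array}\right),
\]
whose row-blocks are indexed by the basis $x_1, \dots, x_q, x_1^2$ of $R$ and column-blocks by the basis $1, x_1, \dots, x_q$ of $M$.

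From here I split into two cases according to whether the "constant part" of $z$ vanishes. If $u_0 \neq 0$, the central $q \times q$ block submatrix obtained by deleting the first column-block and the last row-block is the block-diagonal matrix $\operatorname{diag}(\psi u_0, \dots, \psi u_0)$, whose rank equals $q \cdot \operatorname{rk}(\psi u_0) \geq q \cdot m(\psi) \geq 2 m(\psi)$, where the hypothesis $q \geq 2$ is used in the last inequality.

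If $u_0 = 0$, the central block vanishes entirely and the whole matrix becomes block-diagonal with exactly two surviving pieces: the column block $(\psi u_1, \dots, \psi u_q)^{\top}$ occupying rows $x_1, \dots, x_q$ and column $1$, and the row block $(\psi u_1, \dots, \psi u_q)$ occupying row $x_1^2$ and columns $x_1, \dots, x_q$. Since $z \neq 0$, at least one $u_i$ with $i \geq 1$ is nonzero, so each surviving piece has rank at least $\operatorname{rk}(\psi u_i) \geq m(\psi)$. As these two pieces sit in disjoint rows and disjoint columns, rank-additivity for block-diagonal matrices yields total rank at least $2 m(\psi)$. The main subtlety of the proof is purely bookkeeping: correctly assembling the block matrix from the multiplication table of $A_{CW}$, and, in the $u_0 = 0$ case, verifying that the two off-diagonal "cross" blocks (row $x_1^2$ with column $1$, and rows $x_1, \dots, x_q$ with columns $x_1, \dots, x_q$) are genuinely zero so that the rank-additivity step goes through.
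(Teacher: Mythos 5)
Your proof is correct and follows essentially the same route as the paper: both expand a general nonzero element $a = 1\otimes u_0 + \sum_{i=1}^q x_i\otimes u_i$, compute the slice of $\varphi_{cw}\otimes\psi$ at $a$ from the multiplication table of $A_{CW}$, and bound its rank from below, using $q\ge 2$ exactly once. The only difference is bookkeeping: the paper splits on whether all $u_i$ with $i\ge 1$ vanish and, in the remaining case, exhibits $2m(\psi)$ linearly independent image vectors inside the two sub-slices $(\varphi_{cw}\otimes\psi)(a,1\otimes V)$ and $(\varphi_{cw}\otimes\psi)(a,x_1\otimes V)$, whereas you split on whether $u_0$ vanishes and argue via a block submatrix (the $q$ copies of $\psi u_0$ when $u_0\neq 0$) and rank additivity of the two disjoint blocks when $u_0=0$ --- an equally valid organization of the same computation.
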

\begin{proof}
    For a bilinear map $\psi$, the value $m(\psi)$ is the minimum dimension of the space $\psi(u, V)$ among all nonzero $u \in U$.
    
    Consider a nonzero element $a = 1 \otimes u_0 + \sum_{i = 1}^q x_i \otimes u_i \in M \otimes U$.
    If all $u_i = 0$, then \[(\varphi_{cw} \otimes \psi)(a, M \otimes V) = (\varphi_{cw} \otimes \psi)(1 \otimes u_0, M \otimes V) = \varphi_{cw}(1, M) \otimes \psi(u_0, V) = X \otimes \psi(u_0, V)\] has dimension at least $qm(\psi)$.
    Otherwise, without loss of generality assume $u_1 \neq 0$.
    The space $(\varphi_{cw} \otimes \psi)(a, M \otimes V)$ contains subspaces
    \[\begin{aligned}
    S_0 &= (\varphi_{cw} \otimes \psi)(a, 1 \otimes V) = 
    \{\sum_{i = 1}^q x_i \otimes \psi(u_i, v) \mid v\in V\} \\
    S_1 &= (\varphi_{cw} \otimes \psi)(a, x_1 \otimes V) =
    \{ x_1 \otimes \psi(u_0, v) + x_1^2 \otimes \psi(u_1, v) \mid v\in V \}
    \end{aligned}\] which have at least $2m(\psi)$ linearly independent elements, namely, for each of at least $m(\psi)$ linearly independent vectors $z_k \in \psi(u_1, V)$ we have $x_1 \otimes z_k + x_2 \otimes w_2 + \dots + x_q \otimes w_q \in S_0$ and $x_1^2 \otimes z_k + x_1 \otimes w_1 \in S_1$ for some $w_1, w_2, \dots, w_q \in W$.
    
    In both cases we have $\dim (\varphi_{cw} \otimes \psi)(a, M\otimes V) \geq 2m(\psi)$ for all $a \in M\otimes U$.
\end{proof}

\begin{corollary}
    $\underline{R}(T_{cw}^{\otimes n}) \geqslant (q + 1)^n + 2^n - 1$.
\end{corollary}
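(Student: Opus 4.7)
The plan is to chain the preceding lemma together with Corollary~\ref{cor-substitution} and the equivalence $T_{cw} \sim \varphi_{cw}$. Since border rank is invariant under equivalence (restriction cannot increase border rank, and equivalence is mutual restriction), we have $\underline{R}(T_{cw}^{\otimes n}) = \underline{R}(\varphi_{cw}^{\otimes n})$, where $\varphi_{cw}^{\otimes n} \in (M^{\otimes n})^* \otimes (M^{\otimes n})^* \otimes R^{\otimes n}$. So it is enough to lower bound the border rank of $\varphi_{cw}^{\otimes n}$.

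Next, I apply Corollary~\ref{cor-substitution} to $\varphi_{cw}^{\otimes n}$ viewing its first factor as the substitution direction $U = M^{\otimes n}$, which has dimension $(q+1)^n$. This yields
\[\underline{R}(\varphi_{cw}^{\otimes n}) \geq (q+1)^n - 1 + m(\varphi_{cw}^{\otimes n}).\]
It then remains to show $m(\varphi_{cw}^{\otimes n}) \geq 2^n$, which I do by induction on $n$. The inductive step is precisely the preceding lemma applied with $\psi = \varphi_{cw}^{\otimes (n-1)}$: $m(\varphi_{cw}^{\otimes n}) \geq 2 m(\varphi_{cw}^{\otimes (n-1)})$. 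The base case can be obtained either by checking $m(\varphi_{cw}) \geq 2$ directly (for any nonzero $a = a_0 \cdot 1 + \sum a_i x_i \in M$, the space $\rho(aM)$ contains at least two linearly independent vectors: if all $a_i = 0$ then $\rho(aM) = X$ has dimension $q \geq 2$; otherwise $\rho(a \cdot 1) = \sum a_i x_i \neq 0$ and $\rho(a \cdot x_j) = a_0 x_j + a_j x_1^2$ produce two independent vectors for an appropriate $j$), or more elegantly by applying the lemma with $\psi$ the scalar tensor $1 \in k^* \otimes k^* \otimes k$ satisfying $m(\psi) = 1$ and $\varphi_{cw} \otimes \psi \cong \varphi_{cw}$.

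Combining these bounds gives $\underline{R}(T_{cw}^{\otimes n}) \geq (q+1)^n - 1 + 2^n$, which is the claim. I do not anticipate any real obstacle: all the ingredients are already in place, the only care needed is in identifying the correct factor as the substitution direction and verifying that iterating the lemma produces the desired exponential growth $2^n$.
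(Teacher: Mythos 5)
Your proposal is correct and follows exactly the paper's argument: the paper also deduces $m(\varphi_{cw}^{\otimes n}) = 2^n$ by iterating the preceding lemma and then applies Corollary~\ref{cor-substitution} with the first factor $M^{\otimes n}$ of dimension $(q+1)^n$. Your explicit treatment of the base case (either directly or via the trivial one-dimensional $\psi$) just fills in a detail the paper leaves implicit.
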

\begin{proof}
    Use the previous Lemma to show that $m(T_{cw}^{\otimes n}) = m(\varphi_{cw}^{\otimes n}) = 2^n$ and Corollary~\ref{cor-substitution}.
\end{proof}

If $\lim_{n \to \infty} (\underline{R}(T_{cw}^{\otimes n}))^{1/n} = q+1$, then
the exponent of matrix multiplication would be $2$. While the bound above is nontrivial,
it is yet not strong enough to rule this out.

\subparagraph*{Acknowledgements.}

The authors thank Dmitry Chistikov and anonymous referees for helpful comments and Charilaos Zisopoulos for proofreading.

This work was partially supported by the Deutsche Forschungsgemeinschaft under grant BL~511/10-1.

\bibliography{20-blaeser-lysikov}

\end{document}